\long \def\ignore#1{\relax}
\newcommand{\Nodes}{\ensuremath{\mathcal{N}}}
\newcommand{\GetConf}{\ensuremath{.\mymathtt{conf}}}
\newcommand{\GetCost}{\ensuremath{.\mymathtt{cost}}}
\newcommand\fun{\rightarrow}
\newcommand\dom{\mymathtt{dom}}
\newcommand\ran{\mymathtt{ran}}
\newcommand{\CompTy}[1]{\ensuremath{\mathcal{#1}}}
\newcommand{\Plan}[1]{\ensuremath{\mathsf{#1}}}
\newcommand{\AEOLUS}{Aeolus\xspace}
\newcommand{\trans}[1]{\ensuremath{\xrightarrow{#1}}}
\newcommand{\Ifaces}{\ensuremath{\mathcal{I}}}
\newcommand{\Resources}{\ensuremath{\mathcal{Z}}}
\newcommand{\Res}{\ensuremath{\mathcal{R}}}
\newcommand{\Actions}{\ensuremath{\mathcal{A}}}
\newcommand{\ResTypesF}{\ensuremath{{\Gamma}}}
\newcommand{\ResTy}[1]{\ensuremath{\mathcal{#1}}}
\newcommand{\Conf}[1]{\ensuremath{\mathcal{#1}}}
\newcommand{\mymathtt}[1]{\mbox{\footnotesize\tt #1}}
\newcommand{\GetProv}{\ensuremath{.\mymathtt{prov}}}
\newcommand{\GetSReq}{\ensuremath{.\mymathtt{req$_{\mymathtt{s}}$}}}
\newcommand{\GetWReq}{\ensuremath{.\mymathtt{req$_{\mymathtt{w}}$}}}
\newcommand{\GetResource}{\ensuremath{.\mymathtt{res}}}
\newcommand\pfun{\mathrel{\ooalign{\hfil$\mapstochar\mkern5mu$\hfil\cr$\to$\cr}}}
\newcommand{\tooltxt}{SmartDepl\xspace}
\newcommand{\tool}{{\sf \tooltxt}\xspace}
\lstdefinestyle{ANTLR}{
    basicstyle=\footnotesize\ttfamily\color{black},    breaklines=true,    moredelim=[s][\color{magenta}\ttfamily]{'}{'},    moredelim=*[s][\color{black}\ttfamily]{options}{\}},    commentstyle={\color{gray}\itshape},    morecomment=[l]{//},    emph={        INT,        VARIABLE,
        ID,
        RE
        },emphstyle={\color{blue}\ttfamily},    alsoletter={:,|,;},    morekeywords={:,|,;},    keywordstyle={\color{black}},}
\lstdefinestyle{json}{
    basicstyle=\footnotesize\ttfamily,
    showstringspaces=false,
    breaklines=true,
    moredelim=[s][\color{magenta}\ttfamily]{"}{"},
    morekeywords={:,[,],\{\}},
    literate=
     *{0}{{{\color{blue}0}}}{1}
      {1}{{{\color{blue}1}}}{1}
      {2}{{{\color{blue}2}}}{1}
      {3}{{{\color{blue}3}}}{1}
      {4}{{{\color{blue}4}}}{1}
      {5}{{{\color{blue}5}}}{1}
      {6}{{{\color{blue}6}}}{1}
      {7}{{{\color{blue}7}}}{1}
      {8}{{{\color{blue}8}}}{1}
      {9}{{{\color{blue}9}}}{1},
}
\lstdefinelanguage{ABS}{keywords=
{null,this,thisDC,dyndelta,new,data,type,def,case,of,cog,class,interface,extends
,implements,if,else,await,get,total,load,transfer,movecogto,Fut,return,skip,
while,module,duration,now,deadline,import, export, uses, from, destiny, 
suspend,delta,adds,modifies,removes,original,productline,features,core,
corefeatures,optionalfeatures,after,when,product,hasAttribute,hasMethod,root,
extension,group,allof,oneof,require,exclude,original,ifin,ifout,opt}, 
sensitive=true, comment=[l]{//}, morecomment=[s]{/*}{*/}, morestring=[b]"}
\lstdefinestyle{absstyle}{
language=ABS,
columns=fullflexible,
mathescape=true,showstringspaces=false,keywordstyle=\bf\sffamily,
commentstyle=\sl\sffamily,basicstyle=\footnotesize\sffamily,
inputencoding=latin1, literate=*,
extendedchars}
\lstdefinelanguage{MYSPEC}{keywords=
{sum,and,or,impl,forall,in,of,type,used,by,exists}, 
sensitive=true,morecomment=[s]{'}{'}}
\lstdefinestyle{declangstyle}{
language=MYSPEC,
columns=fullflexible,
mathescape=true,showstringspaces=false,keywordstyle=\bf\ttfamily,
commentstyle=\sl\ttfamily\color{magenta},basicstyle=\footnotesize\ttfamily,
inputencoding=latin1,
literate=*,
extendedchars}
\begin{document}
\title{\texorpdfstring{Optimal and Automated\\ Deployment for Microservices}{Optimal and Automated Deployment for Microservices}}
\author{
Mario Bravetti \inst{1} \and
Saverio Giallorenzo \inst{2} \and
Jacopo Mauro \inst{2} \and \\
Iacopo Talevi \inst{1} \and
Gianluigi Zavattaro \inst{1} }
\authorrunning{M. Bravetti et al.}
\institute{FOCUS Research Team, University of Bologna/INRIA, Italy
\and
Southern Denmark University, Denmark}
\maketitle              \begin{abstract}
Microservices are highly modular and scalable Service Oriented Architectures.
They underpin automated deployment practices like Continuous Deployment and
Autoscaling.
In this paper we formalize these practices and show that automated deployment
--- proven undecidable in the general case --- is algorithmically treatable
for microservices.
Our key assumption is that the configuration life-cycle of a microservice is
split in two phases:
(i) creation, which entails establishing initial connections with already
available microservices, and (ii) subsequent binding/unbinding with other
microservices.
To illustrate the applicability of our approach, we implement an automatic
optimal deployment tool and compute deployment plans for a realistic
microservice architecture, modeled in the Abstract Behavioral Specification
(ABS) language.
\end{abstract}

\section{Introduction}

Inspired by service-oriented computing, Microservices structure software
applications as highly modular and scalable compositions of fine-grained and
loosely-coupled services~\cite{DGLMMMS17}. These features support modern
software engineering practices, like continuous
delivery/deployment~\cite{continuous_delivery} and application
autoscaling~\cite{autoscaling}. Currently, these practices focus on single
microservices and do not take advantage of the information on the
interdependencies within an architecture. On the contrary, architecture-level
deployment plans can \emph{i}) optimize global scaling --- e.g., avoiding the
overhead of redundantly detecting inbound traffic and sequentially scale each
microservice in a pipeline --- and \emph{ii}) avoid ``domino'' effects due to
unstructured scaling --- e.g., cascading slowdowns or outages~
\cite{hellerstein2018serverless,cascadingFailure,outagesDowntime}.

In this paper, we formally investigate the problem of automatizing the
deployment and reconfiguration (e.g., horizontal or vertical scaling) of
microservice architectures, proving formal properties and
presenting an implemented solution.

In our work, we follow the approach taken by the 
\emph{\AEOLUS component model}~\cite{sefm-aeolus,infCom14,concur15},
which was used to formally define the problem of
deploying component-based software systems and to prove that,
in the general case, such problem is undecidable~\cite{sefm-aeolus}.
The basic idea of \AEOLUS is to enrich the specification
of components with a finite state automaton 
that describes their deployment life cycle.
Previous work identified decidable fragments of the \AEOLUS model:
e.g., removing from \AEOLUS replication constraints (e.g., used to specify a minimal amount of services connected to a load balancer)
makes the deployment problem decidable, but non-primitive 
recursive~\cite{infCom14};
removing also 
conflicts (e.g., used to express the impossibility to 
deploy in the same system two types of components)  
makes the problem PSpace-complete~\cite{mfcs15}
or even poly-time~\cite{sefm-aeolus}, but under the
assumption that every required component can be (re)deployed from scratch.

Our intuition is that the \AEOLUS model can be adapted to formally
reason on the deployment of microservices. To achieve our goal, we
significantly revisit the formalization of the deployment problem,
replacing \AEOLUS components with a model of \emph{microservices}.
The main difference between our model of microservices and \AEOLUS
components lies in the specification of their deployment life cycle.
Here, 
instead of using the full power of finite state automata
(like in \AEOLUS and other TOSCA-compliant deployment models~\cite{Brogi15}), 
we assume microservices to have two states: (i) creation and (ii)
binding/unbinding. Concerning creation, we use \emph{strong} dependencies
to express which microservices must be immediately connected to newly
created ones. After creation, we use \emph{weak} dependencies to indicate
additional microservices that can be
bound/unbound.
The principle that guided this modification comes from state-of-the-art
microservice deployment technologies like Docker~\cite{merkel2014docker} and
Kubernetes~\cite{Hightower}. In particular, the weak and strong dependencies
have been inspired by
Docker Compose~\cite{docker_compose} 
(a language for defining multi-container Docker 
applications) where it is possible to specify
different relationships among microservices 
using, e.g., the {\sf depends\_on} (or {\sf external\_links})
modalities that force (or do not force) a specific startup
order similarly to our strong (or weak) dependencies.
Weak dependencies are also useful to model horizontal scaling, e.g., a load
balancer that is bound to and unbound from many microservice instances during
its life cycle.

In our formalization we also consider resource/cost-aware deployments, taking
inspiration from the {\sf memory} and {\sf CPU} resources found in Kubernetes.
We enrich our model of microservices with the specification of the amount of
resources they need to run. In a deployment, a system of microservices runs
within a set of computation \emph{nodes}. In our model, nodes represent
computational units (e.g., virtual machines in an Infrastructure-as-a-Service
Cloud deployment). Each node has a cost and a set of resources available to the microservices it hosts.

On the model above, we define the \emph{optimal deployment problem} as follows: given
an initial microservice system, a set of available nodes, and a new target
microservice to be deployed, find a sequence of reconfiguration actions that,
once applied to the initial system, leads to a new deployment that includes the
target microservice. The \emph{optimal deployment} has two properties: (a) each used node has at
least as many resources as those needed by the hosted microservices; (b) the
total cost (i.e., the sum of the costs) of the used nodes is minimal. We show
that the optimal deployment problem for microservices is decidable by
presenting an algorithm that
works in three main phases:
(1) generate a set of constraints whose solution 
indicates the microservices to be deployed and their
distribution over available nodes;
(2) generate another set of constraints whose solution
indicates the connections to be established;
(3) synthesize the corresponding deployment plan.
The generated set of constraints are enriched with optimization metrics that minimize the overall 
cost of the computed deployment. 

The algorithm has NEXPTIME complexity because,
in the worst-case, the length of the deployment plan could be
exponential in the size of the input. However, in practice it is 
reasonable to assume that each node can host at most a polynomial 
amount of microservices, as a consequence of its resource limitations.
In this case, the deployment problem is NP-complete and the problem of deploying a system minimizing its total cost is 
an NP-optimization problem. Moreover, having reduced the deployment
problem in terms of constraints, we can exploit state of the art
constraint solvers~\cite{gecode,chuffed,ortools} that are frequently
used in practice to cope with NP-hard problems.

To concretely evaluate our approach,
we consider a real-world microservice architecture,
inspired by the reference email processing pipeline  
from Iron.io~\cite{ironIO}.
We model that architecture in the Abstract
Behavioral Specification (ABS) language, a high-level
object-oriented language that supports deployment modeling~\cite{abs}.
We use our technique to compute two types of deployments: an initial one, with
one instance for each microservice, and a set of deployments to horizontally
scale the system depending on small, medium or large increments in the number
of emails to be processed. The experimental results 
are encouraging in that we were able to compute
deployment plans that add more than 30 new microservice instances, assuming
availability of hundreds of machines of three different types, and guaranteeing
optimality.

\section{The microservice optimal deployment problem}
\label{sec:problem}

\begin{figure}[tb]
\centering
 \includegraphics[width=.8\textwidth]{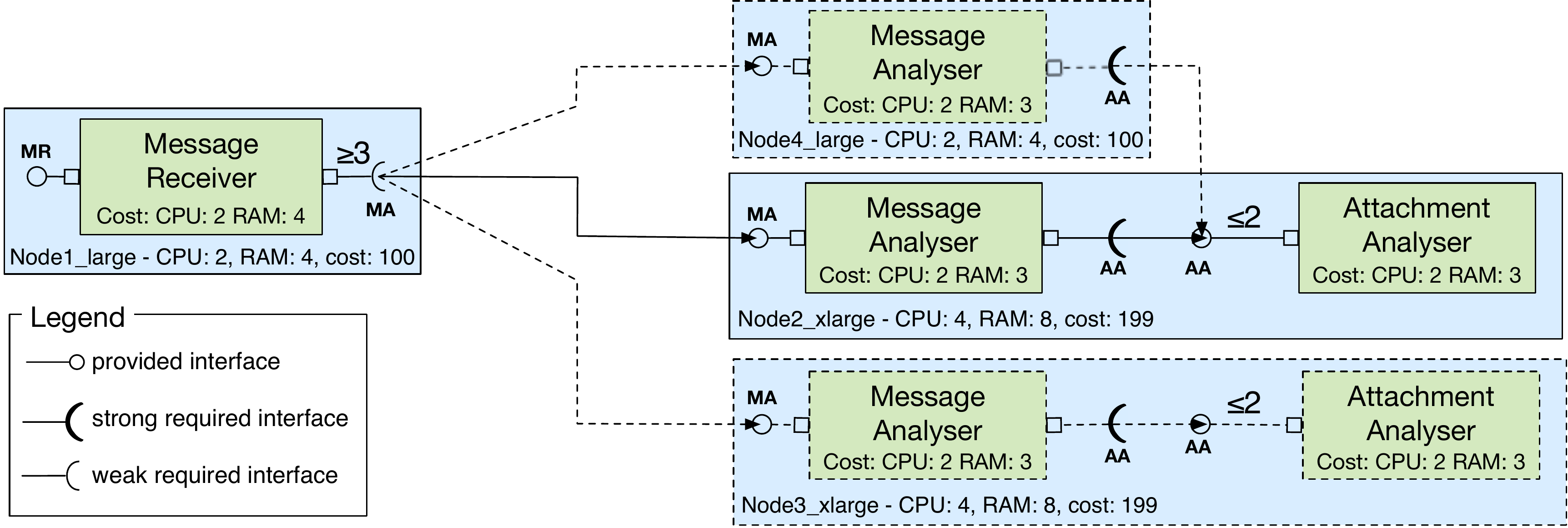}
\caption{Example of microservice deployment (continuous lines: the initial configuration; dashed lines: full configuration).}
\label{fig:informal}
\end{figure}

We model microservice systems as aggregations of components with ports
exposing provided and required interfaces describing offered and required
functionalities, respectively. Microservices are connected by means of bindings 
indicating which port provides the functionality required by another port.
We consider two kinds of requirements: strong required interfaces, that need to
be already fulfilled when the microservice is created, and weak required 
interfaces, that 
must be fulfilled at the end of a deployment (or reconfiguration) plan. 
Microservices are enriched with the specification of the 
resources they need to properly run; such resources are provided
to the microservices by nodes. Nodes can be seen as the unit of computation
executing the tasks associated to each microservice.

As an example, in Fig.~\ref{fig:informal} we have reported
the representation of the deployment of a microservice
system inspired by the email processing pipeline 
that we will discuss in Section~\ref{sec:case_study}.
We consider a simplified pipeline. A {\sf Message Receiver} microservice
handles inbound requests, passing them to a {\sf Message Analyzer} that checks
the email content and sends the attachments for inspection to an {\sf
Attachment Analyzer}. The {\sf Message Receiver} has a port with a \emph{weak}
required interface that can be fulfilled by {\sf Message Analyzer} instances.
The {\sf Message Analyzer} has instead a port with a \emph{strong} required
interface that can be fulfilled by {\sf Attachment Analyzer} instances. In the
second case, the binding between the {\sf Message Analyzer} and the
corresponding {\sf Attachment Analyzer} must be established already when the
{\sf Message Analyzer} is created. In the first case, the bindings between
{\sf Message Receiver} and {\sf Message Analyzer} microservices can be
established afterwards.

The possibility to add new bindings is considered 
in a reconfiguration that,
starting from the initial deployment depicted
in Fig.~\ref{fig:informal} with 
continuous lines, adds the elements
depicted with dashed lines.
In such a reconfiguration,
a 
couple of new instances of {\sf Message Analyzer}
are deployed.
This is done in order to satisfy numerical constraints  
associated to both required and provided interfaces.
For required interfaces, the numerical constraints indicate lower bounds
to the outgoing bindings, while for provided interfaces they specify upper 
bounds to the incoming connections.
In our example, the constraint
$\geq 3$ is associated to the weak required interface of 
{\sf Message Receiver}. In order to fulfill such
a constraint, at least two new instances of {\sf Message Analyzer}
must be added. On the other hand, the constraint $\leq 2$
associated to the interface provided by the {\sf Attachment Analyzer}
implies the creation of a new instance of 
such microservice, in that the initial one cannot serve all 
the three {\sf Message Analyzer}s in the final configuration.

We also model resources: each 
microservice has associated resources
that it consumes (see the {\sf CPU} and {\sf RAM} quantities
associated to the microservices in Fig.~\ref{fig:informal}). 
Resources are provided by nodes, that we represent as containers
for the microservice instances, providing them 
the resources they require. Notice that nodes have also
costs: the total cost of a deployment is the sum of the costs
of the used nodes (e.g., in the example the total cost is 
598 cents per hour, corresponding to the cost of 4 nodes: 2 C4 large and 2 C4 
xlarge virtual machine instances of the Amazon public Cloud).

We now move to the formal definitions.
We assume the following disjoint sets: \Ifaces{} for interfaces, \Resources{} for microservices, and a finite set \Res{} 
for kinds of resources. We use ${\mathbb N}$ to denote natural numbers,
${\mathbb N}^+$ for ${\mathbb N}\setminus\{0\}$,
 and
${\mathbb N}^+_{\infty}$ for ${\mathbb N}^+\cup \{\infty\}$.

\begin{definition}[Microservice type]
  \label{basictypes}
  The set \ResTypesF{} of {\em microservice types},   ranged over by $\ResTy{T}_1,\ResTy{T}_2,\ldots$, contains 5-ples
  $\langle P, D_s, D_w, C, R\rangle$ where:
  \begin{itemize}
  \item $P = (\Ifaces{} \pfun {\mathbb N}_{\infty}^+)$ are the provided interfaces, 
  defined as a partial function from interfaces to corresponding numerical 
  constraints (indicating the maximum number of connected microservices);
  \item $D_s = (\Ifaces{} \pfun {\mathbb N}^+)$ are the \emph{strong} required interfaces, 
  defined as a partial function from interfaces to corresponding numerical 
  constraints (indicating the minimum number of connected microservices);
  \item $D_w = (\Ifaces{} \pfun {\mathbb N})$ are the \emph{weak} required interfaces  (defined as the strong ones, with the difference that also the constraint 0 can be used indicating that it is not strictly necessary to fulfill a weak interface); 
  \item $C \subseteq \Ifaces{}$ are the conflicting interfaces; 
  \item $R = (\Res{} \fun {\mathbb N})$ specifies resource consumption, 
  defined as a total function from resources to corresponding quantities
  indicating the amount of required resources.
  \end{itemize}
  We assume sets $\dom(D_s)$, $\dom(D_w)$ and $C$ to be pairwise disjoint.\footnote{Given a partial function $f$, we use $\dom(f)$ to denote the domain of $f$, i.e.\ the set $\{ e \mid \exists e' : (e,e') \in f \}$.}

\end{definition}
\emph{Notation}: given a microservice type $\ResTy{T} = \langle P, D_s, D_w, C, 
R\rangle$, 
we use the following postfix projections $\GetProv{}$, $\GetSReq{}$, 
$\GetWReq{}$, $\GetConf{}$ and $\GetResource{}$ to decompose it; e.g., 
$\ResTy{T}\GetWReq{}$ returns the partial 
function associating arities to weak required interfacess.
In our example, for instance, the {\sf Message Receiver}
microservice type is such that {\sf Message Receiver}$\GetWReq{({\sf MA})} = 3$
and {\sf Message Receiver}$\GetResource{({\sf RAM})} = 4$.
When the numerical constraints are not explicitly indicated, 
we assume as default value $\infty$ for provided interfaces (i.e., they can satisfy 
an unlimited amount of ports requiring the same interface) 
and 1 for required interfaces (i.e., one connection with a port providing the
same interface is sufficient).

Notice that in the formal definition we consider also conflicting interfaces:
these 
can be used to express conflicts among microservice types that cannot
be both present in a deployment, or cases in which 
a microservice type can have at most one instance (because each additional
instance conflicts with the first one).

We now formalize a well-formedness condition on  
microservice types by requiring that there could be no possible 
cycles of dependencies involving only strong required interfaces. 
Indeed, as strong required interfaces must be already fulfilled 
at the time microservices are instantiated, it is impossible to
deploy mutually strong dependent microservices.

\begin{definition}[Well-formed Universe]
Given a finite set of microservice types $U$ (that we also call \emph{universe}), 
we define the strong dependency graph of $U$ as follows: $G(U)=(U,V)$ with 
$V = \{({\ResTy T},{\ResTy T}') | {\ResTy T}, {\ResTy T}' \in U\ . \ 
\exists p \in \Ifaces \ . \ p \in \dom({\ResTy T}\GetSReq{}) \cap 
\dom({\ResTy T}'\GetProv{}) \}$.
The universe $U$ is well-formed if its strong dependency graph $G(U)$
is acyclic.
\end{definition}
In the following, we always assume universes to be well-formed.
It is worth noting that this does not imply the impossibility
to deploy microservice system with circular dependencies. This remains 
possible, but it is necessary that at least
one weak required interface is involved in the cycle.

\begin{definition}[Nodes]
  \label{nodes}
  The set $\Nodes$ of {\em nodes}   is ranged over by $o_1,o_2,\ldots$ 
  We assume the following information to be associated to each node $o$ in $\Nodes$.
\begin{itemize}
\item A function $R = (\Res{} \fun {\mathbb N})$ that specifies node {\em resource availability}: we use $o\GetResource{}$ to denote such a function.
\item A value in ${\mathbb N}$ that specifies node {\em cost}: we use $o\GetCost{}$ to denote such a value.
\end{itemize}

\end{definition}
As example, in Fig. \ref{fig:informal}, the node {\sf Node1\_large}
is such that {\sf Node1\_large}$\GetResource{({\sf RAM})} = 4$
and {\sf Node1\_large}$\GetCost{} = 100$.

We now define configurations that describe systems composed of microservice
instances and bindings that interconnect them.  A configuration, ranged over by
${\Conf C}_1, {\Conf C}_2,\ldots$, is given by a set of microservice types, a set
of deployed microservices (with their associated type), and a set of
bindings. Formally:
\begin{definition}[Configuration]\label{configurations}
  A \emph{configuration} $\Conf C$ is a 4-ple $\langle Z, T, N, B\rangle$
  where:
  \begin{itemize}
  \item $Z \subseteq \Resources$ is the set of the currently deployed
  {\em microservices};
  \item $T = (Z \fun \ResTy{T})$ are the {\em microservice types}, 
  defined as a function from deployed microservices to microservice types;
  \item $N = (Z \fun \Nodes)$ are the {\em microservice nodes}, 
  defined as a function from deployed microservices to nodes that host them;
  \item $B \subseteq \Ifaces \times Z \times Z$ is the set of {\em bindings},
    namely 3-ples composed of an interface, the microservice that requires that
    interface, and the microservice that provides it; we assume that, for $(p,z_1,z_2) \in B$, the two
    microservices $z_1$ and $z_2$ are distinct and $p \in (\dom(T(z_1)\GetSReq{}) \cup \dom(T(z_1)\GetWReq{})) \cap \dom(T(z_2)\GetProv{})$.
  \end{itemize}
\end{definition}

In our example, if we use {\sf mr} to refer 
to the instance of {\sf Message Receiver}, and {\sf ma} for the 
initially available {\sf Message Analyzer}, we will have the
binding ({\sf MA},{\sf mr},{\sf ma}). Moreover, concerning the microservice
placement function $N$, we have $N({\sf mr})= {\sf Node1\_large}$ and  
$N({\sf ma})= {\sf Node2\_xlarge}$.

We are now ready to formalize the notion of correctness of configuration. We
first define a \emph{provisional correctness}, considering only constraints on strong
required and provided interfaces. Then, we define a general notion of configuration
correctness, considering all kinds of requirements. Intuitively, a
configuration is provisionally correct if, considering its microservice bindings, the numerical
constraints on both strong required and provided interfaces are satisfied. Similarly, a
configuration is correct if it also satisfies the numerical
constraints on weak required interfaces and conflicts are not violated.

\begin{definition}[Provisionally correct configuration]\label{provcorrectconf}
A configuration ${\Conf C}  \!=\! \langle Z, T, N, B\rangle$ is {\em provisionally correct} if, for
  each node $o \!\in\! \ran(N)$, it holds\footnote{Given a (partial) function $f$, we use $\ran(f)$ to denote the range of $f$, i.e.\ the function image set $\{ f(e) \mid e \in \dom(f) \}$. }
$$\forall \, r \!\in\! \Res{} \ldotp \;\, o\GetResource{}(r) \geq \sum_{z \in Z, N(z)=o} T(z)\GetResource{}(r)$$
 and, for each microservice $z \in
  Z$, both following conditions hold: 
\begin{itemize}
\item $(p \mapsto n) \!\in\! T(z)\GetSReq{}$ implies that there exist $n$ distinct microservices
  $z_1,\ldots, z_n$ $\in \!Z \!\setminus\! \{z\}$ such that, for every $1 \leq i \leq n$,
  we have $\langle p, z, z_i\rangle \in B$ and $p \in$ $\dom(T(z_i)\GetProv{})$;
\item $(p \mapsto n) \!\in\! T(z)\GetProv{}$ implies that there exist no $m$
  distinct microservices $z_1,\ldots, z_m \in Z \!\setminus\! \{z\}$, with $m > n$,
  such that, for every $1 \leq i \leq m$, we have $\langle p, z_i, z\rangle
  \in B$ and $p \in \dom(T(z_i)\GetSReq{}) \cup \dom(T(z_i)\GetWReq{})$.
\end{itemize}
\end{definition}

\begin{definition}[Correct configuration]\label{correctconf}
A configuration ${\Conf C}  \!=\! \langle Z, T, N, B\rangle$ is {\em correct} if ${\Conf C}$ is provisionally correct and, for each microservice $z \in
  Z$, both following conditions hold:
\begin{itemize}
\item $(p \mapsto n) \!\in\! T(z)\GetWReq{}$ implies that there exist $n$ distinct microservices
  $z_1,\ldots, z_n$ $\in \!Z \!\setminus\! \{z\}$ such that, for every $1 \leq i \leq n$,
  we have $\langle p, z, z_i\rangle \in B$ and $p \in$ $\dom(T(z_i)\GetProv{})$;
\item $p \!\in\! T(z)\GetConf{}$  implies that, for each $z' \in
  Z\!\setminus\!\{z\}$, we have $p \notin \dom(T(z')\GetProv{})$.
\end{itemize}
\end{definition}

Notice that, in the example in Fig.~\ref{fig:informal}, the initial
configuration (in continuous lines) is only provisionally correct in that the
weak required interface {\sf MA} (with arity 3) of the {\sf Message Receiver}
is not satisfied (because there is only one outgoing binding). The full
configuration --- including also the elements in dotted lines --- is
instead correct: all the constraints associated to the interfaces are
satisfied.

We now formalize how configurations evolve by means
of atomic actions.

\begin{definition}[Actions]\label{actions}
  The set \Actions{} contains the following actions:
  \begin{itemize}
  \item $\mathit{bind}(p,z_1,z_2)$ where $z_1,z_2 \!\in\! \Resources$,  with $z_1 \!\neq\! z_2$, and $p \!\in\!
    \Ifaces$: add a binding between $z_1$ and $z_2$ on port $p$ (which is supposed to be a weak-require port of $z_1$ and a provide port of $z_2$);
  \item $\mathit{unbind}(p,z_1,z_2)$ where $z_1,z_2 \!\in\! \Resources$,  with $z_1 \!\neq\! z_2$, and $p \!\in\!
    \Ifaces$: remove the specified binding on $p$ (which is supposed to be a weak required interface of $z_1$ and a provide port of $z_2$);
  \item $\mathit{new}(z,{\ResTy T}, o, B_s)$ where $z \!\in\! \Resources$, $
  {\ResTy T} \!\in\! \Gamma$, $o \!\in\! \Nodes$ and $B_s \!=\! (\dom({\ResTy
  T}\GetSReq{}) \!\fun\! 2^{\Resources-\{z\}})$; with $B_s$ (representing
  bindings from strong required interfaces
  in {\ResTy T} to sets of microservices) being such that, for each $p \in \dom({\ResTy T}\GetSReq{})$, it holds $|B_s(p)|\geq {\ResTy T}\GetSReq{}(p)$: add a new microservice $z$ of type $\ResTy T$ hosted in $o$ and bind each of its strong required interfaces to a set of microservices as described by $B_s$;\footnote{Given sets $S$ and $S'$ we use: $2^S$ to denote the power set of $S$, i.e. the set $\{S' \mid S \subseteq S \}$; $S-S'$ to denote set difference; and $|S|$ to denote the cardinality of $S$.}
  \item $\mathit{del}(z)$ where $z \!\in\! \Resources$: remove the microservice $z$
    from the configuration and all bindings involving it.
  \end{itemize}
\end{definition}

In our example, assuming that the
initially available {\sf Attachment Analyzer}
is named {\sf aa},
we have that the action to create the initial instance
of {\sf Message Analyzer} is 
$\mathit{new}({\sf ma},{\sf Message Analyzer}, 
{\sf Node2\_xlarge}, ({\sf AA} \mapsto \{{\sf aa}\}))$.
Notice that it is necessary to establish the binding with
the {\sf Attachment Analyzer} because of the corresponding 
strong required interface.

The execution of actions can now be formalized using a labeled transition
system on configurations, which uses actions as labels.

 \begin{definition}[Reconfigurations]\label{reconfigurations}
   Reconfigurations are denoted by transitions $\Conf{C} \trans{\alpha}
   \Conf{C}'$ meaning that the execution of $\alpha \in \Actions$ on the
   configuration \Conf{C} produces a new configuration $\Conf{C}'$. The
   transitions from a configuration $\Conf{C}=\langle Z,T,N,B \rangle$ are
   defined as follows:
\footnotesize
 $$
 \begin{array}{@{}lcr@{}}  
     \begin{array}[t]{l}
       \Conf{C} \trans{\mathit{bind}(p,z_1,z_2)} 
       \langle Z,T,N,B \cup \langle p,z_1,z_2\rangle \rangle \\
       \quad
       \begin{array}[t]{l}
         \mbox{if $\langle p,z_1,z_2\rangle \not\in B$  and} \\
         \mbox{$p \in \dom(T(z_1)\GetWReq) \cap \dom(T(z_2)\GetProv) $}
       \end{array}
     \end{array} & &
   \begin{array}[t]{l}
     \Conf{C} \trans{\mathit{unbind}(p,z_1,z_2)} 
     \langle Z,T,N,B \!\setminus\! \langle p,z_1,z_2\rangle \rangle \\
     \quad       
     \begin{array}[t]{l}
     \mbox{if $\langle p,z_1,z_2\rangle \in B$ and} \\
         \mbox{$p \in \dom(T(z_1)\GetWReq) \cap \dom(T(z_2)\GetProv)  $}
     \end{array}
   \end{array}
   \\ \\
     \begin{array}[t]{l}
       \Conf{C} 
       \trans{\mathit{new}(z,{\CompTy T},n,B_s)} 
       \langle Z \cup \{z\}, T', N', B' \rangle \\
       \quad
       \begin{array}{l}
         \mbox{if $z \not\in Z$ and} \\
	   \mbox{$\forall \, p \in \dom({\ResTy T}\GetSReq{}) \ldotp \, \forall z' \in B_s(p) \ldotp$} \\
         \mbox{$\hspace{.5cm} p \in \dom(T(z')\GetProv{})$ and} \\
         \mbox{$T' = T \cup \{ (z \mapsto {\CompTy T}) \}$ and} \\
         \mbox{$N' = N \cup \{ (z \mapsto o) \}$ and} \\
         \mbox{$B' = B \cup \{ \langle p,z,z'\rangle \mid z' \in B_s(p) \}$} 
       \end{array}
     \end{array}
     &
     &
     \begin{array}[t]{l}
       \Conf{C} \trans{\mathit{del}(z)} 
       \langle Z \!\setminus\! \{z\}, T', N' , B' \rangle \\
       \quad
       \begin{array}{l}
         \mbox{if $T'=\{(z' \mapsto {\CompTy T}) \in T \mid z \neq z'\}$ and}\\
         \mbox{$N'=\{(z' \mapsto o) \in N \mid z \neq z'\}$ and}\\
         \mbox{$B'=\{\langle p,z_1,z_2\rangle \in B \mid z \not\in \{z_1,z_2\}\}$}
       \end{array}
     \end{array}
 \end{array}
 $$
\end{definition}

A \emph{deployment plan} is simply a sequence of actions that
transform a provisionally correct configuration (without violating provisional correctness along the way) and, finally, reach a correct configuration.

\begin{definition}[Deployment plan] \label{deployment_plan}
   A {\em deployment plan} $\mathsf{P}$ from a provisionally correct configuration ${\mathcal C}_{0}$ is a 
sequence of actions $\alpha_{1}, \dots, \alpha_{m}$ such that:
\begin{itemize}
\item
there exist ${\mathcal C}_{1}, \dots, 
{\mathcal C}_{m}$ provisionally correct configurations, with ${\mathcal C}_{i-1} \trans{\alpha_{i}}  
{\mathcal C}_{i}$ for $1 \leq i \leq m$, and
\item
${\mathcal C}_{m}$ is a correct configuration.
\end{itemize}
Deployment plans are also denoted with ${\mathcal C}_{0}
   \trans{\alpha_1} {\mathcal C}_{1} \trans{\alpha_{2}} \cdots
   \trans{\alpha_{m}} {\mathcal C}_{m}$.
\end{definition}

In our example, a deployment plan that reconfigures the
initial provisionally correct configuration into the 
final correct one is as follows:
a $\mathit{new}$ action to create the new instance of 
{\sf Attachment Analyzer}, followed by two $\mathit{new}$ actions
for the new {\sf Message Analyzer}s (as commented
above, the connection with the {\sf Attachment Analyzer}
is part of these $\mathit{new}$ actions), and finally
two $\mathit{bind}$ actions to
connect the {\sf Message Receiver} to the two new instances of
{\sf Message Analyzer}.

We now have all the ingredients to define the \emph{optimal deployment
problem}, that is our main concern: given a universe of microservice types, a
set of available nodes and an initial configuration, we want to know whether
and how it is possible to deploy at least one microservice of a given
microservice type $\ResTy{T}$ by optimizing the overall cost of nodes hosting
the deployed microservices.

\begin{definition}[Optimal deployment problem] \label{problem}
  The \emph {optimal deployment problem} has, as input, a finite well-formed universe $U$ of microservice
  types, a finite set of available nodes $O$, an initial provisionally correct configuration ${\mathcal C}_{0}$ and a microservice type $\ResTy{T}_t \in U$.  
  The output is:
\begin{itemize}
\item A {\bf deployment plan} 
  $\Plan{P} = {\mathcal C}_{0} \trans{\alpha_1}
   {\mathcal C}_{1} \trans{\alpha_{2}} \cdots \trans{\alpha_{m}} 
   {\mathcal C}_{m}$ such that
\begin{itemize}
\item for all $\Conf{C}_{i}=\langle Z_i,T_i,N_i,B_i \rangle$, with $1 \leq i \leq m$, it holds $\forall z \in Z_i \ldotp \, T_i(z) \in U \wedge N_i(z) \in O$, and
\item $\Conf{C}_{m}=\langle Z_m,T_m,N_m,B_m \rangle$ satisfies $\exists z \in Z_m : T_i(z) = \ResTy{T}_t$;
\end{itemize}
if there exists one. In particular, among all deployment plans satisfying the constraints above, one that minimizes $\sum_{z \in Z_m} N_m(z)\GetCost$ (i.e. the overall cost of nodes in the last configuration ${\mathcal C}_{m}$), is outputed.
\item {\bf no}    (stating that no such plan exists); otherwise.
\end{itemize}
\end{definition}

We are finally ready to state our main result on the decidability of the
optimal deployment problem. To prove the result we describe an approach that
splits the problem in three incremental phases: (1) the first phase checks if
there is a possible solution and assigns microservices to deployment nodes,
(2) the intermediate phase computes how the microservices need to be connected
to each other, and (3) the final phase synthesizes the corresponding
deployment plan.

\begin{theorem}
\label{th:decidability}
The optimal 
deployment problem is decidable.
\end{theorem}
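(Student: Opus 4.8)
The plan is to reduce the optimal deployment problem to the solution of a finite family of arithmetic constraint systems, following the three phases described just before the statement, and to argue that each phase is effective. The decisive preliminary step — from which the rest follows by fairly routine constraint construction — is a normal‑form result for deployment plans.

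First I would observe that deployment plans can always be put in a \emph{normal form}, which is what makes reachability from $\mathcal{C}_0$ expressible by static constraints. Since a strong required interface is bound once and for all at creation time whereas a weak one only has to be satisfied at the end, any plan can be rearranged into: a block of $\mathit{unbind}$ actions; then a block of $\mathit{del}$ actions executed in an order compatible with the strong‑dependency relation among instances — acyclic because $U$ is well‑formed — so that no surviving microservice ever drops below an arity of one of its strong required interfaces; then a block of $\mathit{new}$ actions executed in a topological order of $G(U)$, so that whenever a microservice is created the providers for its strong required interfaces already exist; and finally a block of $\mathit{bind}$ actions installing the weak bindings. (An initial microservice may also be deleted and re‑created, e.g.\ on a cheaper node, the re‑created instance being treated as fresh.) Along such a plan every intermediate configuration is provisionally correct: $\mathit{del}$ and $\mathit{unbind}$ only relax strong‑require and provide constraints and free resources; $\mathit{new}$ respects node resource bounds because the target placement does and the configuration only grows during that block; and the final $\mathit{bind}$ actions never exceed a provider's capacity because the target binding does not. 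Hence the problem becomes: find a correct configuration $\mathcal{C}_m$ over the types of $U$ and the nodes of $O$ that contains an instance of $\ResTy{T}_t$, is \emph{compatible} with $\mathcal{C}_0$ in the sense just described, and has minimal total cost.

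Next, for Phase~1, I would encode the existence of such a $\mathcal{C}_m$ with integer variables counting, for each microservice type of $U$ and each node of $O$, the number of deployed instances — finitely many variables, since $U$ and $O$ are finite — together with a finite choice of the fate (kept / deleted / re‑created on some node) of each of the finitely many microservices of $\mathcal{C}_0$. The constraints are linear: per node and resource, hosted consumption does not exceed availability; at least one instance of $\ResTy{T}_t$ is present; per interface, the aggregate provide capacity covers the aggregate strong‑plus‑weak required demand and the number of distinct providers is at least the largest requiring arity; the conflict conditions hold; and the chosen fate of the $\mathcal{C}_0$‑microservices is consistent, which is checkable using the acyclicity of $G(U)$. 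The $\infty$ arities and the conflicts force a finite case split, so one obtains a finite union of integer linear programs, and the cost objective is linear in the variables; minimising this non‑negative objective is decidable, and the optimum is attained whenever the instance is feasible.

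Finally, Phase~2 realises the bindings for a chosen optimal placement: for each interface this is a degree‑constrained bipartite subgraph problem — connect each requiring instance to that many distinct providing instances without exceeding capacities — and the two aggregate conditions imposed in Phase~1 are exactly the Gale--Ryser/flow conditions that make it solvable and computable, and they also classify bindings as strong or weak. Phase~3 then emits, for the computed $\mathcal{C}_m$ and binding assignment, the normal‑form plan described above. Putting the pieces together gives the algorithm: range over the finitely many case splits, solve the associated integer linear program with the cost objective, and for each optimum run Phases~2 and~3, returning the cheapest plan obtained, or \textbf{no} if every case is infeasible; this decides the problem. The main obstacle — and the point where the restriction to a two‑phase life cycle is indispensable — is the normal‑form argument of the first step: it is what turns a dynamic reachability question into static linear constraints, whereas in the general \AEOLUS model no such rearrangement is available and the problem is undecidable.
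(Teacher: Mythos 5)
Your proposal is correct and follows essentially the same route as the paper: reduce the problem to a finite (integer, linear after case-splitting) constraint system counting instances per type and node with the cost objective, realise the bindings as a degree-constrained bipartite matching whose feasibility is guaranteed by the counting constraints, and synthesise a tear-down-and-rebuild plan whose $\mathit{del}$ and $\mathit{new}$ blocks follow a topological order of the acyclic strong-dependency graph. The only differences are presentational: you front-load the plan normal form as a lemma (the paper constructs the same plan in its Phase 3), and your extra bookkeeping about the ``fate'' of the initial microservices is harmless but unnecessary, since every correct target configuration is reachable by deleting everything first.
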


\begin{proof}
The proof is in the form of an algorithm that solves the optimal deployment
problem. We assume that the input to the problem to be solved is given by $U$
(the microservice types), $O$ (the set of available nodes), ${\mathcal C}_{0}$
(the initial provisionally correct configuration), and $\ResTy{T}_t \in U$
(the target microservice type). We use $\Ifaces(U)$ to denote the set of
interfaces used in the considered microservice types, namely
$\Ifaces(U)=\bigcup_{\ResTy{T} \in U} \dom(\ResTy{T}\GetSReq{}) \cup
\dom(\ResTy{T}\GetWReq{}) \cup \dom(\ResTy{T}\GetProv{}) \cup
\ResTy{T}\GetConf{}$.

\noindent
The algorithm is based on three phases.

\vspace{5pt}
\emph{Phase 1}
The first phase consists of the generation of a set of constraints
that, once solved, indicates how many instances should be created
for each microservice type $\ResTy{T}$ (denoted with $\texttt{inst}(\ResTy{T})$),
how many of them should be deployed on node $o$ 
(denoted with $\texttt{inst}(\ResTy{T},o)$),
and how many bindings should be established for each interface $p$
from instances of type $\ResTy{T}$ --- considering both 
weak and strong required interfaces --- 
and instances of type $\ResTy{T}'$
(denoted with $\texttt{bind}(p,\ResTy{T},\ResTy{T}')$). We also generate
an
optimization function that guarantees that the generated
configuration is minimal w.r.t.\ its total cost.

We now incrementally report the generated constraints.
The first group of constraints deals with the number of bindings:

\begin{subequations}
\label{constraints-eq}
\footnotesize
\begin{align}
& \bigwedge_{p \in \Ifaces(U)} \quad \bigwedge_{\ResTy{T} \in U, 
\; p \in \dom(\ResTy{T}\GetSReq{})}
\ResTy{T}\GetSReq{}(p) \cdot \texttt{inst}(\ResTy{T})
\leq \sum_{\ResTy{T'}\in U}
\texttt{bind}(p,\ResTy{T},\ResTy{T'})
\label{constraint-require}\\
& \bigwedge_{p \in \Ifaces(U)} \quad \bigwedge_{\ResTy{T} \in U, 
\; p \in \dom(\ResTy{T}\GetWReq{})}
\ResTy{T}\GetWReq{}(p) \cdot \texttt{inst}(\ResTy{T})
\leq \sum_{\ResTy{T'}\in U}
\texttt{bind}(p,\ResTy{T},\ResTy{T'})
\label{constraint-require2}\\
& \bigwedge_{p \in \Ifaces(U)} \quad \bigwedge_{\ResTy{T}\in U, \; \ResTy{T}\GetProv{}(p) < \infty}
\ResTy{T}\GetProv{}(p) \cdot \texttt{inst}(\ResTy{T}) \geq 
\sum_{\ResTy{T'}\in U}
\texttt{bind}(p,\ResTy{T'},\ResTy{T})
\label{constraint-provide}\\
& \bigwedge_{p \in \Ifaces(U)} \quad \bigwedge_{\ResTy{T}\in  U , 
\; \ResTy{T}\GetProv{}(p) = \infty}
\texttt{inst}(\ResTy{T}) = 0\ \ \Rightarrow\ \
\sum_{\ResTy{T'}\in U}
\texttt{bind}(p,\ResTy{T'},\ResTy{T})=0
\label{constraint-provide2}\\
& \bigwedge_{p \in \Ifaces(U)} \quad \bigwedge_{\ResTy{T}\in  U , 
\; p \notin \dom(\ResTy{T}\GetProv{})}
\;\;\; \sum_{\ResTy{T'}\in U}
\texttt{bind}(p,\ResTy{T'},\ResTy{T})=0
\label{constraint-provide3}
\end{align} 
\end{subequations}

Constraint~\ref{constraint-require} and \ref{constraint-require2} guarantee 
that there are
enough bindings to satisfy all the required interfaces, considering
both strong and weak requirements.
Symmetrically, constraint~\ref{constraint-provide}
guarantees that the number of bindings is not greater than
the total available capacity, computed as the sum of the single capacities of
each provided interface.
In case the capacity
is unbounded (i.e., $\infty$), it is sufficient to have at least one instance
that activates such port to support any possible requirement (see
constraint~\ref{constraint-provide2}). 
Finally, constraint~\ref{constraint-provide3} guarantees that no
binding is established connected to provided interfaces of
microservice types that are not deployed.

The second group of constraints deals with the number
of instances of microservices to be deployed.

\begin{subequations}
\label{constraints-eq-1}
\footnotesize
\begin{align}
& \texttt{inst}(\ResTy{T}_t) \geq 1
\label{constraint-target}\\
& \bigwedge_{p \in \Ifaces(U)} \quad
\bigwedge_{\substack{
\ResTy{T}\in  U , \\
p \in \ResTy{T}\GetConf}} \quad
\bigwedge_{\substack{
\ResTy{T'}\in  U-\{\ResTy{T}\} , \\
p \in \dom(\ResTy{T'}\GetProv{}) }}
\texttt{inst}(\ResTy{T}) > 0 
\ \ \Rightarrow\ \
\texttt{inst}(\ResTy{T'}) = 0
\label{constraint-conflict1}\\
& \bigwedge_{p \in \Ifaces(U)} \quad
\bigwedge_{\substack{
\ResTy{T}\in  U , \; p \in \ResTy{T}\GetConf \; \wedge \\
p \in \dom(\ResTy{T}\GetProv{}) }}
\texttt{inst}(\ResTy{T}) \leq 1
\label{constraint-conflict2}\\
& \bigwedge_{p \in \Ifaces(U)}
\quad \bigwedge_{\ResTy{T} \in  U}
\quad \bigwedge_{\ResTy{T'}\in  U-\{\ResTy{T}\}}
\texttt{bind}(p,\ResTy{T},\ResTy{T'}) \leq
\texttt{inst}(\ResTy{T}) \cdot
\texttt{inst}(\ResTy{T'})
\label{constraint-distinct}\\
& \bigwedge_{p \in \Ifaces(U)}
\quad \bigwedge_{\ResTy{T} \in  U}
\texttt{bind}(p,\ResTy{T},\ResTy{T}) \leq
\texttt{inst}(\ResTy{T}) \cdot
(\texttt{inst}(\ResTy{T})-1)
\label{constraint-distinct2}
\end{align} 
\end{subequations}

The first constraint~\ref{constraint-target} guarantees the
presence of at least one instance of the target microservice.
Constraint~\ref{constraint-conflict1} guarantees that no two instances
of different types will be created if one activates a conflict on 
an interface provided by the other one.
Constraint~\ref{constraint-conflict2}, consider the other case
in which a type activates the same interface both in 
conflicting and provided modality:
in this case, at most one instance of such type can be created.
Finally, the constraints~\ref{constraint-distinct}
and~\ref{constraint-distinct2} guarantee that there are enough pairs of
distinct instances to establish all the necessary bindings. Two distinct
constraints are used: the first one deals with bindings between microservices of
two different types, the second one with bindings between microservices of
the same type.

The last group of constraints deals with the distribution of microservice
instances over the available nodes $O$.

\begin{subequations}
\label{binpacking-eq}
\footnotesize
\begin{align}
& \texttt{inst}(\ResTy{T}) = \sum_{o \in O} \texttt{inst}(\ResTy{T},o) 
\label{bin-packing1}\\
& \bigwedge_{r \in \Res{}} \bigwedge_{o \in O} 
\sum_{\ResTy{T} \in U} \texttt{inst}(\ResTy{T},o) \cdot \ResTy{T}\GetResource{}(r) \leq 
o\GetResource{}(r) 
\label{bin-packing2}\\
& \bigwedge_{o \in O} \big(\sum_{\ResTy{T} \in U} \texttt{inst}(\ResTy{T},o) > 0 
\big)
\Leftrightarrow \texttt{used}(o) 
\label{bin-packing3}\\
& \min \sum_{o \in O, \, \texttt{used}(o) }  o\GetCost 
\label{bin-packing4}
\end{align}
\end{subequations}

Constraint~\ref{bin-packing1} simply formalizes the relationship among
the variables $\texttt{inst}(\ResTy{T})$ and $\texttt{inst}(\ResTy{T},o)$
(the total amount of all instances of a microservice type, should
correspond to the sum of the instances locally deployed on each node).
Constraint~\ref{bin-packing2} checks that each node has enough resources
to satisfy the requirements of all the hosted microservices.
The last two constraints define the optimization function
used to minimize the total cost: constraint~\ref{bin-packing3}
introduces the boolean variable $\texttt{used}(o)$ which is true
if and only if node $o$ contains at least one microservice instance;
constraint~\ref{bin-packing4} is the function to be minimized, i.e.,
the sum of the costs of the used nodes.

These constraints, and the optimization function, are expected to 
be given in input to a constraint/optimization solver. If a solution
is not found
it is not possibile
to deploy the required microservice system; otherwise, the
next phases of the algorithm are executed to synthesize
the optimal deployment plan.

\vspace{5pt}
\emph{Phase 2}
The second phase consists of the generation of another set of constraints
that, once solved, indicates the bindings to be established between
any pair of microservices to be deployed. More precisely,
for each type $\ResTy{T}$ such that $\texttt{inst}(\ResTy{T})>0$,
we use $s_i^\ResTy{T}$, with $1 \leq i \leq \texttt{inst}(\ResTy{T})$,
to identify the microservices of type $\ResTy{T}$ to be deployed.
We also assume a function $N$ that associates 
microservices to available nodes $O$, which is compliant
with the values $\texttt{inst}(\ResTy{T},o)$ already computed in Phase 1,
i.e., given a type $\ResTy{T}$
and a node $o$, the number of $s_i^\ResTy{T}$, with $1 \leq i \leq \texttt{inst}(\ResTy{T})$, such that $N(s_i^\ResTy{T}) = o$ coincides with 
$\texttt{inst}(\ResTy{T},o)$.

In the constraints below we use the variables 
$\texttt{b}(p,s_i^\ResTy{T},s_j^{\ResTy{T}'})$
(with $i \neq j$):
the value of such variables is $1$ if there is a connection between 
the required interface $p$ of $s_i^\ResTy{T}$ and 
the corresponding provided interface of $s_j^{\ResTy{T}'}$, $0$ otherwise.
We also make use of an auxiliary total function 
$\it{limProv}(\ResTy{T},p)$ that extends $\ResTy{T}\GetProv{}$
associating 0 to the interfaces outside its domain.

\begin{subequations}
\label{binding-eq}
\footnotesize
\begin{align}
&
\bigwedge_{\ResTy{T} \in  U}
\bigwedge_{p \in \Ifaces(U)}
\bigwedge_{i \in 1 \ldots \texttt{inst}(\ResTy{T})}
\sum_{j \in (1 \ldots \texttt{inst}(\ResTy{T}))\setminus \{i\}} \texttt{b}(p,s_i^\ResTy{T},s_j^\ResTy{T}) \leq \it{limProv}(\ResTy{T},p)
\label{const_binding1}\\
& 
\bigwedge_{\ResTy{T} \in  U}
\bigwedge_{p \in \dom(\ResTy{T}\GetSReq{})}
\bigwedge_{i \in 1 \ldots \texttt{inst}(\ResTy{T})}
\sum_{j \in (1 \ldots \texttt{inst}(\ResTy{T}))\setminus \{i\}} 
\texttt{b}(p,s_i^\ResTy{T},s_j^\ResTy{T}) \geq \ResTy{T}\GetSReq{(p)}
\label{const_binding2}\\
& 
\bigwedge_{\ResTy{T} \in  U}
\bigwedge_{p \in \dom(\ResTy{T}\GetWReq{})}
\bigwedge_{i \in 1 \ldots \texttt{inst}(\ResTy{T})}
\sum_{j \in (1 \ldots \texttt{inst}(\ResTy{T}))\setminus \{i\}} 
\texttt{b}(p,s_i^\ResTy{T},s_j^\ResTy{T}) \geq \ResTy{T}\GetWReq{(p)}
\label{const_binding3}\\
& 
\bigwedge_{\ResTy{T} \in  U}
\bigwedge_{p \notin \dom(\ResTy{T}\GetSReq{}) \cup \dom(\ResTy{T}\GetWReq{})}
\bigwedge_{i \in 1 \ldots \texttt{inst}(\ResTy{T})}
\sum_{j \in (1 \ldots \texttt{inst}(\ResTy{T}))\setminus \{i\}} 
\texttt{b}(p,s_i^\ResTy{T},s_j^\ResTy{T}) = 0
\label{const_binding4}
\end{align}
\end{subequations}

Constraint~\ref{const_binding1} considers the provided interface capacities
to fix upper bounds to the bindings to be established,
while contraints~\ref{const_binding2} and~\ref{const_binding3}
fix lower bounds based on the required interface capacities,
considering both the weak (see~\ref{const_binding2})
and the strong (see~\ref{const_binding3}) ones.
Finally, constraint~\ref{const_binding4} indicates that it is not
possible to establish connections on interfaces that are not
required.

A solution for these 
constraints 
exists because the 
constraints~\ref{constraint-require}~$\dots$~\ref{constraint-distinct2}
(already solved during Phase 1)
guarantee that the configuration to be synthesized contains
enough capacity on the provided interfaces to satisfy
all the required interfaces.

\vspace{5pt}
\emph{Phase 3}
In this last phase we synthesize the deployment plan that, when applied to the 
initial configuration ${\mathcal C}_{0}$, reaches a new configuration 
${\mathcal C}_{t}$ with nodes, microservices and bindings as computed in 
the first two phases of the algorithm. Without loss of generality, in this
proof we show the existence of a simple plan that first removes 
the elements in the initial configuration and then deploys 
the target configuration from scratch.
However, as also discussed in Section~\ref{sec:case_study},
in
practice
it is possible to define elaborated planning mechanisms that
re-use microservices already deployed.

Reaching an empty 
configuration is a trivial task since it is always possible to perform in the initial configuration
unbind actions for all the bindings connected to  
weak required interfaces. Then the microservices can be deleted since for 
the well-formedness of the system it is possible to order, using a topological 
sort, the microservices to be removed without violating any strong required
interface (e.g., first remove the microservice not requiring anything and 
repeat until all the microservices have been deleted).

The deployment of the target configuration follows a similar pattern. Given the
distribution of microservices over nodes
(computed in the first phase) and the corresponding bindings
(computed in the second phase), the microservices can be 
created by following a topological sort considering the microservices
dependencies following from the strong required interfaces.
When all the microservices are deployed on the corresponding nodes, the 
remaining bindings (on weak required ports) may be added in any possible order.
\qed
\end{proof}

\begin{remark}
\label{rem:bindOpt}
The constraints generated during Phase 2 of the algorithm, in order
to establish the microservice bindings,
are expected to be given in input
to a constraint/optimization solver. One can enrich such constraints with metrics to optimize, e.g., 
the number of local bindings (i.e., give a preference to the connections
among microservices hosted in the same node):
\[
\min \sum_{(\ResTy{T},\ResTy{T}' \in U), i \in 1 \ldots \texttt{inst}(\ResTy{T}),
j \in 1 \ldots \texttt{inst}(\ResTy{T}'), p \in \Ifaces(U)\ .\ 
N(s_i^\ResTy{T}) \neq N(s_j^{\ResTy{T}'})} \texttt{b}(p,{s_i^\ResTy{T}},s_j^{\ResTy{T}'})
\]
Another example, used in the case study discussed
in Section~\ref{sec:case_study}\footnote{We modeled a load balancer as 
a microservice having a weak required interface 
with arity 0 that can be provided by its back-end service. By adopting the 
above metric, the synthesized 
configuration connected all possible services to such required interface 
in order to allow the load balancer to forward requests to all of them.
}, is the following metric that maximizes 
the number of bindings:
\[
\max \sum_{s_i^\ResTy{T}, s_j^{\ResTy{T}'}, p \in \Ifaces(U)} \texttt{b}(p,s_i^\ResTy{T}, s_j^{\ResTy{T}'})
\]
\end{remark}

From the complexity point of view, it is possible to show that the decision 
versions of the optimization 
problem solved in Phase 1 is NP-complete, in Phase 2 is in NP, 
while the planning in Phase 3 is synthesized in 
polynomial time. Unfortunately, due to the 
fact that numeric constraints can be represented in 
log space, the output of Phase 2 requiring the enumeration of all the 
microservices to deploy can be exponential in the size of the output of 
Phase 1 (indicating only the total number of instances for each type). 
For this reason, the optimal deployment problem is in NEXPTIME. 
However,
in practice, due to the resource usage of the microservices, the number of 
microservices to be deployed 
can be assumed to be polynomial in the size of the input. In this case the optimal 
deployment problem becomes an NP-optimization problem and its 
decision 
version is NP-complete.
A formal proof of the 
complexity of the problem is available in Appendix \ref{app:complexity}.

\section{Application of the technique to the case-study}
\label{sec:case_study}

Given the asymptotic complexity of our solution 
(NP under the assumption of polynomial size of the target configuration) 
we have decided to evaluate its applicability  in practice
by considering
a real-world microservice architecture,
namely the email processing pipeline described in~\cite{ironIO}.
\begin{figure}[t]
  \centering
  \includegraphics[width=.8\textwidth]{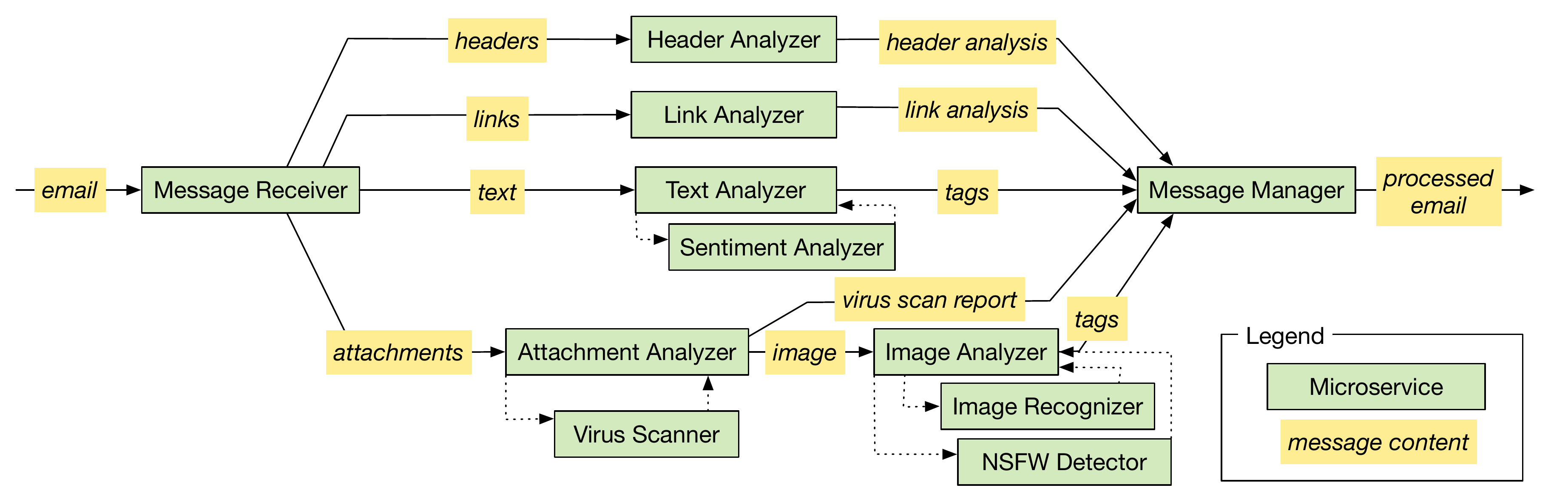}
  \caption{Microservice architecture for email processing.}
  \label{fig:arch}
\end{figure}
The considered architecture
separates and routes the components found in an email (headers, links, text,
attachments) into distinct, parallel sub-pipelines with specific tasks (e.g., remove malicious attachments, tag the content of the mail). We report in
Fig.~\ref{fig:arch} a depiction of the architecture.
From left to right, when an email reaches the
\textsf{Message Receiver}, it sends each component
into a specific sub-pipeline. In the sub-pipelines, some
microservices --- e.g., \textsf{Text Analyzer} and \textsf{Attachment
Analyzer} --- coordinate with other microservices --- e.g.,
\textsf{Sentiment Analyzer} and \textsf{Virus Scanner} --- to process their
inputs.
Each microservice in the architecture has a given resource consumption
(expressed in terms of CPU 
and memory).
As expected, the processing of each
email component entails a specific load.
Some microservices can handle large inputs, e.g., in the range of 40K
simultaneous requests --- like the \textsf{Header Analyzer} that processes
short and uniform inputs. Other microservices sustain heavier computations ---
like the \textsf{Image Recognizer} --- and can handle smaller simultaneous inputs, e.g., in the range of 10K requests.

To model the system above, we use the Abstract Behavioral Specification (ABS)
language, a high-level object-oriented language that supports deployment
modeling~\cite{abs}. ABS is agnostic w.r.t.\ deployment platforms (Amazon AWS,
Microsoft Azure) and technologies (e.g., Docker or Kubernetes) and it offers
high-level deployment primitives for the creation of new \emph{deployment
components} and the instantiation of objects inside them. Here, we use ABS
deployment components as computation nodes, ABS objects as microservice
instances, and ABS object references as bindings. Finally, to describe the requirements in our model, we use ABS with \tool~\cite{smartdepl}, an extension
that supports dependency annotations (e.g., from other classes, available
resources) in ABS classes. We use annotations to model strong
required interfaces as class dependencies and weak required interfaces as
object references, which can be passed to running objects. We define a class
for each microservice type, plus one \emph{load balancer} class for each
microservice type. A load balancer distributes requests over a set of
microservice instances that can scale horizontally. Finally, we model nodes
over three popular Amazon EC2 instances: \texttt{c4\_large},
\texttt{c4\_xlarge}, and \texttt{c4\_2xlarge} (with the corresponding provided
resources and costs).

\begin{center}
{\sffamily \scriptsize
\newcolumntype{P}[1]{>{\centering\arraybackslash}p{#1}}
\begin{tabular}{|P{.4\textwidth}|P{.14\textwidth}|P{.14\textwidth}|P{.14\textwidth}|P{.14\textwidth}|}
    \hline 
    \textbf{Microservice (max computational load)} 
& \textbf{Initial (10K)} & \textbf{+20K} & \textbf{+50K} & \textbf{+80K} \\ 
    \hline 
    \textbf{MessageReceiver}($\infty$) & 1 & +0 & +0 & +0 \\ 
    \hline 
    \textbf{MessageParser}(40K) & 1 & +0 & +1 & +0 \\ 
    \hline 
    \textbf{HeaderAnalyser}(40K) & 1 & +0 & +1 & +0 \\ 
    \hline 
    \textbf{LinkAnalyser}(40K) & 1 & +0 & +1 & +0 \\ 
    \hline 
    \textbf{TextAnalyser}(15K) & 1 & +1 & +2 & +2 \\ 
    \hline 
    \textbf{SentimentAnalyser}(15K) & 1 & +3 & +4 & +6 \\ 
    \hline 
    \textbf{AttachmentsManager}(30K) & 1 & +1 & +2 & +2 \\ 
    \hline 
    \textbf{VirusScanner}(13K) & 1 & +3 & +4 & +6 \\ 
    \hline 
    \textbf{ImageAnalyser}(30K) & 1 & +1 & +2 & +2 \\ 
    \hline 
    \textbf{NSFWDetector}(13K) & 1 & +3 & +4 & +6 \\ 
    \hline 
    \textbf{ImageRecognizer}(13K) & 1 & +3 & +4 & +6 \\ 
    \hline 
    \textbf{MessageAnalyser}(70K) & 1 & +1 & +2 & +2 \\ 
    \hline 
  \end{tabular}}
\end{center}
 In the table above, we report the result of our algorithm
w.r.t.\ four incremental configurations: the initial in column 2 and under
incremental loads in 3--5. We also consider an availability of 40 nodes for
each of the three node types.
In the first column of the Table, next to a microservice type,
we report its corresponding maximum computational load. As visible in columns 2--5, different maximal computational loads imply
different scaling factors w.r.t.\ a given number of simultaneous requests. In the initial configuration
we consider 10K simultaneous requests and we have one instance of each
microservice type (and of the corresponding load balancer). The other
deployment configurations deal with three scenarios of horizontal scaling,
assuming three increasing increments of inbound messages (20K, 50K, and 80K). In the three scaling scenarios, we do not implement the planning algorithm 
described in Phase 3 of the proof of Theorem~\ref{th:decidability}. Contrarily, 
we take advantage of the presence of the load balancers and, as described in 
Remark~\ref{rem:bindOpt}, we achieve a similar result with an optimization 
function that maximizes the number of bindings of the load balancers.
For every scenario, we generated
automatically the ABS code for the 
plan that deploys an optimal configuration, using a time cap of half an hour 
for every deployment scenario.\footnote{Half and hour is a reasonable time cap 
for the computation of the deployment plans at the design phase, as in our 
case. For run time usage, trying to reduce the running times to few 
minutes (i.e., the times it usually takes to start a new virtual machine in a 
public cloud) is left as a future work.}

The ABS code modeling the system and the generated code are publicly available
at~\cite{abs_code_ex}. A graphical representation of the initial configuration
is available in Appendix~\ref{app:configuration}.

\section{Related Work and Conclusion}

With the current popularity of Cloud Computing, the problem of automating
application deployment has attracted a lot of attention and many system management
tools exists~\cite{kanies-puppet-login,mcollective,chef,ansible}. Those tools support the specification of deployment plans but they do not support automatic distribution of software instances over the
available machines. For these
reasons, those tools do not solve the deployment problem as defined in this paper, 
but are just deployment engines to concretely execute deployment plans.

The proposals closest to ours are those by Feinerer~\cite{Feinerer13} and by
Fischer at al.~\cite{engage}. Both proposals rely on a solver to plan
deployments. The first is based on the UML component model, which includes
conflicts and dependencies, but lacks the modeling of nodes. The second does
not support conflicts in the specification language. Neither proposals support
the computation of optimal deployments.

Our work is inspired by the Aeolus component model~\cite{infCom14,concur15},
the Zephyrus configuration optimizer~\cite{zephyrus2}, and
ConfSolve~\cite{ConfSolve}. The Aeolus model paved the way to reason on
deployment and reconfiguration, proving some decidability results. Zephyrus is
a configuration tool grounded on the Aeolus model and underpins the first
phase of our approach. Similarly, ConfSolve relies on constraint solving
techniques to propose an optimal allocation of virtual machines to servers,
and of applications to virtual machines. Both tools ignore the problem of
synthesizing a low-level plan to reach the final configuration which, in the
general case, has been proven undecidable.
In this work, by considering microservices, we prove that the generation of
the plan becomes decidable and thus fully automatable, from the synthesis of
the optimal configuration to the generation of the actions to deploy it. We
show a practical application of our approach on a non-trivial example of
microservice architecture, modeled in the Abstract Behavioral Specification
(ABS) language. As a result, we synthesize an optimal initial configuration
and different scaling scenarios, generating the deployment actions directly in
ABS.

Regarding autoscaling, existing
solutions~\cite{swarm,mesos,cloud_watch,Hightower} support the automatic
increase or decrease of the number of instances of a service/container, when
some conditions (e.g., CPU average load greater than 80\%) are met. Our work is
an example of how we can go beyond single-component horizontal scaling
policies. Contrarily, our approach supports the computation of optimal
horizontal scaling operations involving at the same time more than one service,
thus enabling to reason on autoscaling operation at the application
level.

As a future work we are interested in investigating local search 
approaches to speed up the
solution of the optimization problems involved in the deployment problem. This
will allow us to use our approach at run time when responses are needed in a 
short amount
of time (e.g., minutes)
at the price of losing the optimality guarantee of the solutions. This is 
probably an inevitable trade-off due to the NP-hardness of the 
optimal deployment problem.
 
\bibliographystyle{splncs04}
\bibliography{biblio}

\newpage

\appendix

\section{Optimal Deployment Problem Complexity}
\label{app:complexity}

\begin{theorem}
The optimal 
deployment problem is in NEXPTIME. If the number of microservices to be 
deployed is polynomial in the size of the input, the problem is an an 
NP-optimization (NPO) problem and its decision 
version is NP-complete.
\end{theorem}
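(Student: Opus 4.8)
The plan is to bound, phase by phase, the sizes of the objects produced by the algorithm in the proof of Theorem~\ref{th:decidability} and the time needed to produce them, and then compose the bounds; the decisive preliminary is an a priori estimate of how many microservices an optimal deployment can contain. \emph{A priori instance bound.} I would first show that if the optimal deployment problem has a solution, then it has one whose final configuration $\Conf{C}_m$ contains at most $2^{q(n)}$ microservices, where $n$ is the size of the input and $q$ a fixed polynomial. Starting from an arbitrary feasible $\Conf{C}_m$, greedily delete any instance that is needed neither to satisfy some required-interface lower bound nor for the target requirement; deleting instances can only relax provided-interface upper bounds and conflict constraints, so feasibility is preserved and we reach a \emph{minimal} feasible configuration. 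In a minimal configuration every instance is justified by a chain of strong/weak requirements rooted at the target instance; by well-formedness the strong dependency graph $G(U)$ is acyclic, so the strong portion of such a chain has length at most $|U| \le n$, and each requirement arity, being written in binary in the input, is at most $2^n$, hence the instances forced through strong requirements number at most $(2^n)^{|U|} \le 2^{n^2}$; the extra instances forced by weak requirements and by finite provided capacities are bounded in the same way. So $2^{q(n)}$ instances suffice, and in particular every counter $\texttt{inst}(\ResTy{T})$, $\texttt{inst}(\ResTy{T},o)$ and every $\texttt{bind}(\cdot)$ can be written with polynomially many bits.

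\emph{Phase 1 is NP-complete.} The constraint system of Phase~1 has polynomially many variables and polynomially many constraints of bounded polynomial degree. By (the constraint-level version of) the bound above, any satisfiable instance of this system has a solution in which every variable has polynomially many bits, so the decision version --- ``is there a valuation with $\sum_{o : \texttt{used}(o)} o\GetCost \le k$?'' --- is in NP by guess-and-check. For NP-hardness I would reduce from \textsc{Bin Packing}: represent each item $j$ of size $s_j$ by a microservice type consuming $s_j$ units of a single resource and exposing one provided interface $p_j$; let the target type consume nothing and carry, for each $j$, a strong required interface $p_j$ of arity $1$, so that deploying the target forces exactly one instance of each item type; take $m$ identical unit-cost nodes whose resource capacity equals the bin size, and the empty initial configuration. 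Then a deployment of total cost $\le k$ exists iff the items fit into $k$ bins. This reduction exercises only Phase~1 (and produces a target configuration of polynomial size), so it also yields NP-hardness of the full problem's decision version.

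\emph{Phases 2 and 3, composition, and the polynomial case.} Once the $\texttt{inst}(\cdot)$ values are fixed, Phase~2 instantiates them as the individuals $s_i^{\ResTy{T}}$ --- at most exponentially many by the bound above --- and builds a constraint system of size polynomial in the number of individuals, hence exponential in $n$; a solution always exists because constraints~\ref{constraint-require}--\ref{constraint-distinct2} make the aggregate bind counts consistent with the per-individual demands and capacities, so the $\texttt{b}(\cdot)$ assignment is obtainable from a degree-constrained bipartite matching / transportation argument and can be found nondeterministically in time polynomial in that (exponential) size. Phase~3 then outputs, in time polynomial in its output, a plan of length polynomial in the numbers of individuals and bindings --- unbind all weak bindings, delete the instances in an order consistent with a topological sort of $G(U)$, recreate them in the reverse order, re-add the weak bindings --- hence a plan of exponential length in $n$. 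Composing the three phases gives a nondeterministic procedure running in exponential time, so the optimal deployment problem is in NEXPTIME. Under the assumption that the number of microservices to be deployed is polynomial in $n$, the systems of Phases~1 and 2 have polynomial size and are solvable in NP, Phase~3 is polynomial, the objective $\sum_{z \in Z_m} N_m(z)\GetCost$ is polynomially computable, and feasible solutions are polynomially bounded and polynomially verifiable; hence the problem is an NPO problem, and its decision version is in NP by the same composition and NP-hard already by the \textsc{Bin Packing} reduction above, therefore NP-complete.

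\emph{Main obstacle.} The delicate point is the a priori instance bound: making the ``minimal feasible configuration'' argument precise, and turning acyclicity of $G(U)$ together with the binary encoding of arities into the $2^{q(n)}$ estimate, while tracking carefully where an unavoidable exponential blow-up occurs (the unary enumeration of individuals in Phases~2 and 3) as opposed to where binary encoding keeps the problem small (Phase~1, and hence the whole problem under the polynomial-size assumption). The \textsc{Bin Packing} reduction and the transportation argument for solvability of Phase~2 are routine by comparison.
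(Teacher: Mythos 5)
Your proposal is correct and follows essentially the same route as the paper's proof: NP-completeness of Phase 1 via a reduction from bin packing (items as resource-consuming types forced by a dummy/target's strong requirements, bins as nodes, cost as bin count), Phase 2 in NP, Phase 3 polynomial, the exponential blow-up located exactly at the passage from the binary-encoded $\texttt{inst}(\cdot)$ counters to the unary enumeration of individuals $s_i^{\ResTy{T}}$ (yielding NEXPTIME), and the collapse to an NPO problem with NP-complete decision version under the polynomial-size assumption. Your explicit a priori $2^{q(n)}$ bound on the number of instances is a useful addition that the paper leaves implicit, but be aware that your chain-counting argument via acyclicity of $G(U)$ does not literally extend to weak required interfaces (whose dependency graph may be cyclic); the clean way to get that bound --- and the one the paper implicitly relies on by citing the linearizability of constraints \ref{constraint-require}--\ref{constraint-distinct2} --- is the standard small-solution property of polynomial-size integer linear programs.
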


\begin{proof}
The proof derives from the fact that the decision version of the optimization 
problem solved in phase 1 is NP-complete, the decision version of the 
optimization problem solved in phase 2 is in NP, and the problem in phase 
3 is polynomial.

Due to the fact that numeric constraints can be represented in 
log space, the input of phase 2 can be exponential in the size of the output of 
phase 1. This for instance happens when the target component requires an 
interface $p$ with numerical constraint $\geq n$ and when all the components 
providing the interface $p$ have numerical constraint equal to 1. The 
solution in phase 1 will require the deployment of $n$ microservices and can 
be represented in $O(\log(n))$ space. However, phase 2 requires the list of 
microservices to be deployed and this is represented only in $O(n)$ space. 

This makes the optimal deployment problem an NEXPTIME problem.
However, when the microservices to be deployed in the final configuration 
are polynomially bounded in the size of the input\footnote{Note that this is a 
reasonable expectation in practice because microservices require resources hence
only a limited number of them are installable on the same node.}, the 
optimal deployment problem becomes an NPO problem due to the fact that its 
decision version is an NP-complete problem, being equivalent to the 
execution in sequence of 2 NP-complete 
problems.

We will now proceed by proving the complexity of the 3 phases used to solve the 
optimal deployment problem.

\vspace{5pt}
\noindent
\emph{Phase 1}
As proven in \cite{concur15}, the constraints in  
\ref{constraint-require}~$\dots$~\ref{constraint-distinct2} can be linearized.
Due to the fact that the remaining constraints \ref{bin-packing1} \dots 
\ref{bin-packing4} are the standard linear constraints of the bin 
packing problem, all the constraints of the phase 1 are linear and therefore 
the problem is in NP.
The hardness can be proven by reducing the bin packing problem to the 
considered problem. The reduction is straightforward: bins corresponds to 
nodes, packages are represented by microservices. The size of a package is 
encoded in the resource consumption of the microservice. The problem of 
minimizing the number of bins is therefore translated into finding the 
minimal amount of nodes to deploy the given microservices. To require the 
deployment of all nodes a new dummy target component of size 0 may be 
introduced using strong required interface for requiring the deployment of all 
the 
other microservices.

\vspace{5pt}
\noindent
\emph{Phase 2}
As far as the decision version of the phase 2 problem is concerned, it is clear 
that it is in NP due to the linearity of the constraints
\ref{const_binding1}~$\dots$~\ref{const_binding4}. 

To prove the decidability of the 
deployment problem, it is not needed to optimize the bindings. This, however, 
may be useful to express preferences over bindings on the final configuration.
In the following we study the complexity of the problem when a metric is used 
to try to optimize the bindings. We restrict ourselves to consider only linear 
metrics.

When linear metric constraints are used, the problem becomes NP-hard. By 
choosing the right metric it is 
indeed possible to reduce the partition problem into the considered problem.

The partition problem, a well-known NP-complete problem, checks the existence 
of a partition of a set 
$S$ into two subsets $A$, $B$ such that the difference between the sum of 
elements in $A$ and the sum of elements in $B$ is 0.

This problem can be encoded by using i) a 
microservice $\ResTy{T}_i$ 
for every number $i \in S$, ii) two 
microservices $\ResTy{T}_A$ and $\ResTy{T}_B$ representing the two sets $A$ and 
$B$, and iii) a dummy target microservice that requires the deployment of all 
the others.
We can enforce all the microservices to be deployed only once by allowing them 
to provide and be in conflict with the same interface ($p_i$ for the
$\ResTy{T}_i$ microservices, $p_a$ for $\ResTy{T}_A$ and $p_b$ for $\ResTy{T}_b$).
Every $\ResTy{T}_i$ should provide interfaces $p$ and $q$ with a 
numerical constraint $\leq 1$. $\ResTy{T}_A$ and $\ResTy{T}_B$ should instead 
weak require the interface $p$ with numerical constraint $\geq 0$ and 
provide the interface $q$. The dummy target microservice should only require 
$|S| + 2$ interfaces $q$.
With this universe of microservices, it is possible to define a metric that 
weights with $i$ every connection 
between $\ResTy{T}_A$ 
and $\ResTy{T}_i$, with $-i$ every connection between  
$\ResTy{T}_B$ and $\ResTy{T}_i$. The original partition problem can be solved 
by checking if the sum of the weights is 0.

\vspace{5pt}
\noindent
\emph{Phase 3}
It is easy to see that the 3rd phase is polynomial: it simply follows from 
the polinomial complexity 
of the topological sort over the number of components to be 
deployed and the set of 
interfaces $\Ifaces(U)=\bigcup_{\ResTy{T} \in U} 
\dom(\ResTy{T}\GetSReq{}) \cup \dom(\ResTy{T}\GetWReq{}) \cup 
\dom(\ResTy{T}\GetProv{}) \cup \ResTy{T}\GetConf{}$.
\qed
\end{proof}

\section{Graphical representation of the initial configuration}
\label{app:configuration}

\begin{sidewaysfigure}
\includegraphics[width=\textwidth]{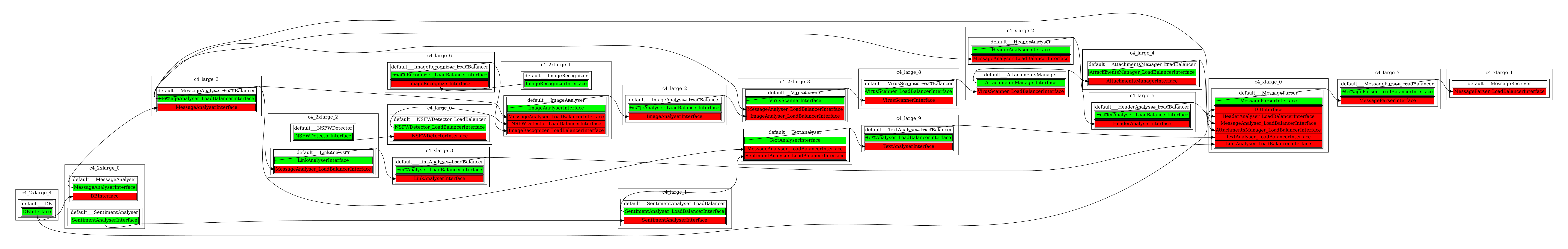}
\caption{Initial configuration of the email microservice system.}
\label{fig:initial_conf}
\end{sidewaysfigure}

\begin{figure}[tb]
	\centering
	\makebox[\textwidth][c]{\includegraphics[width=1.25\textwidth]{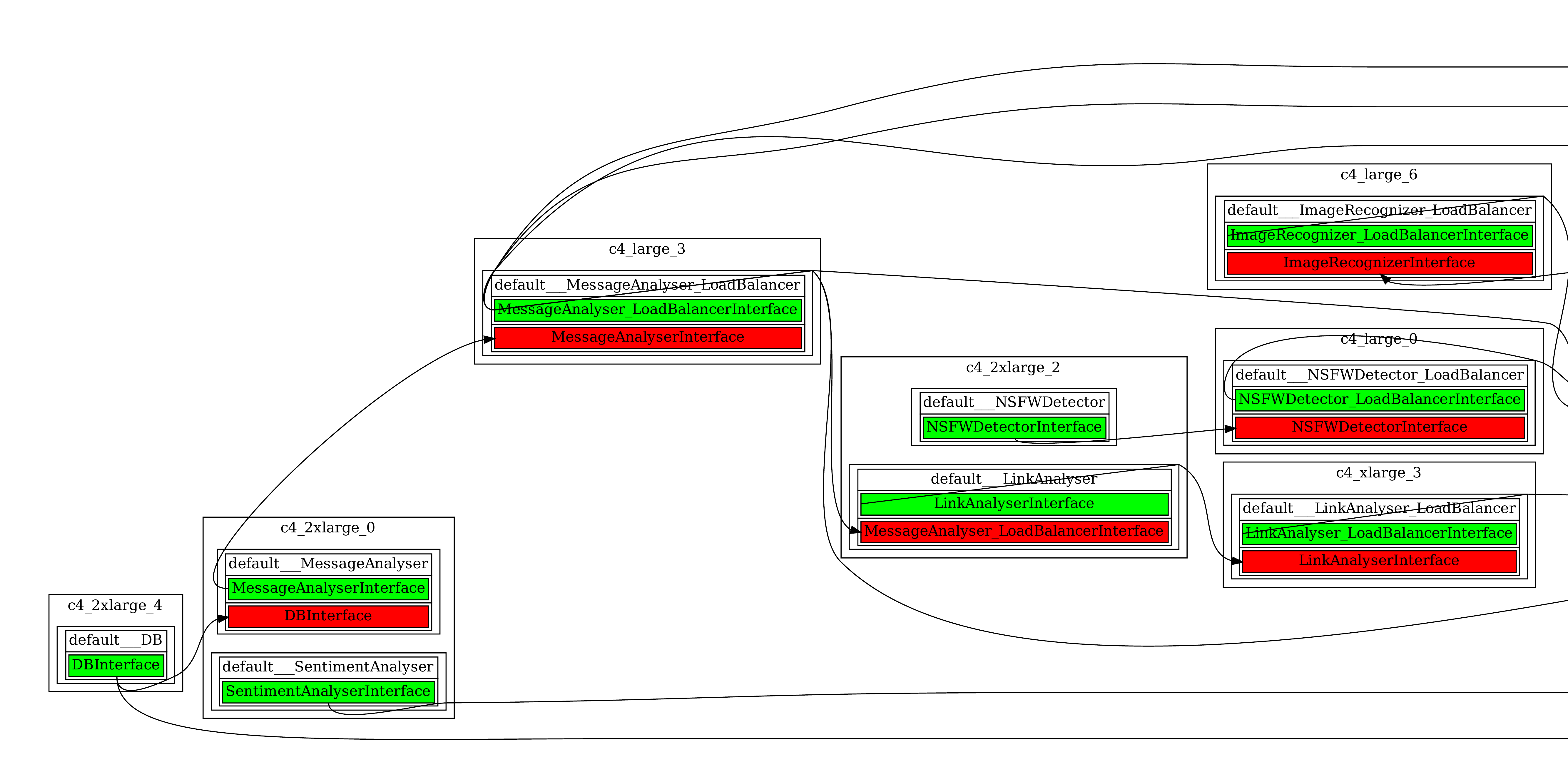}}
	\makebox[\textwidth][c]{\includegraphics[width=1.25\textwidth]{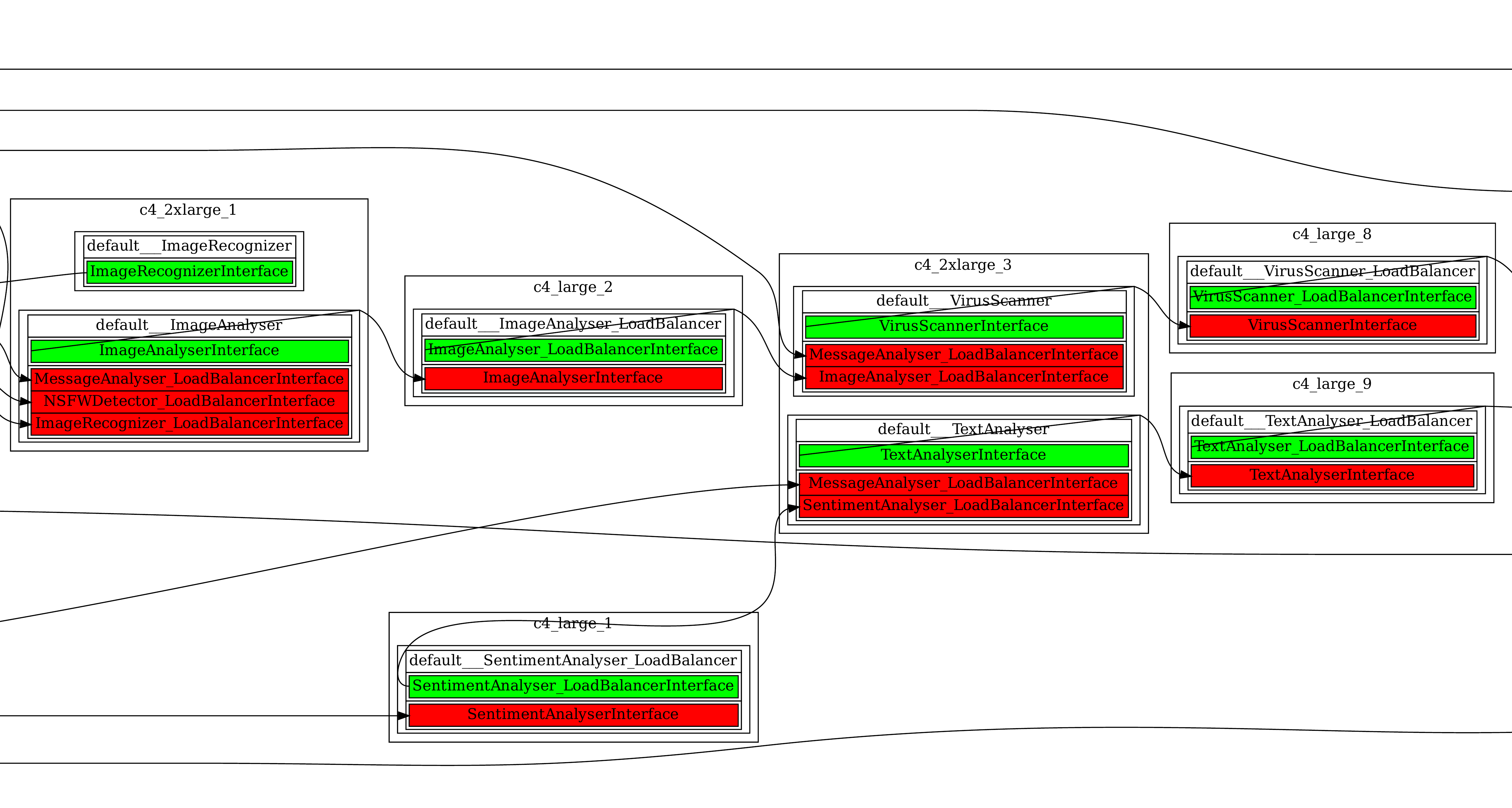}}
	\makebox[\textwidth][c]{\includegraphics[width=1.25\textwidth]{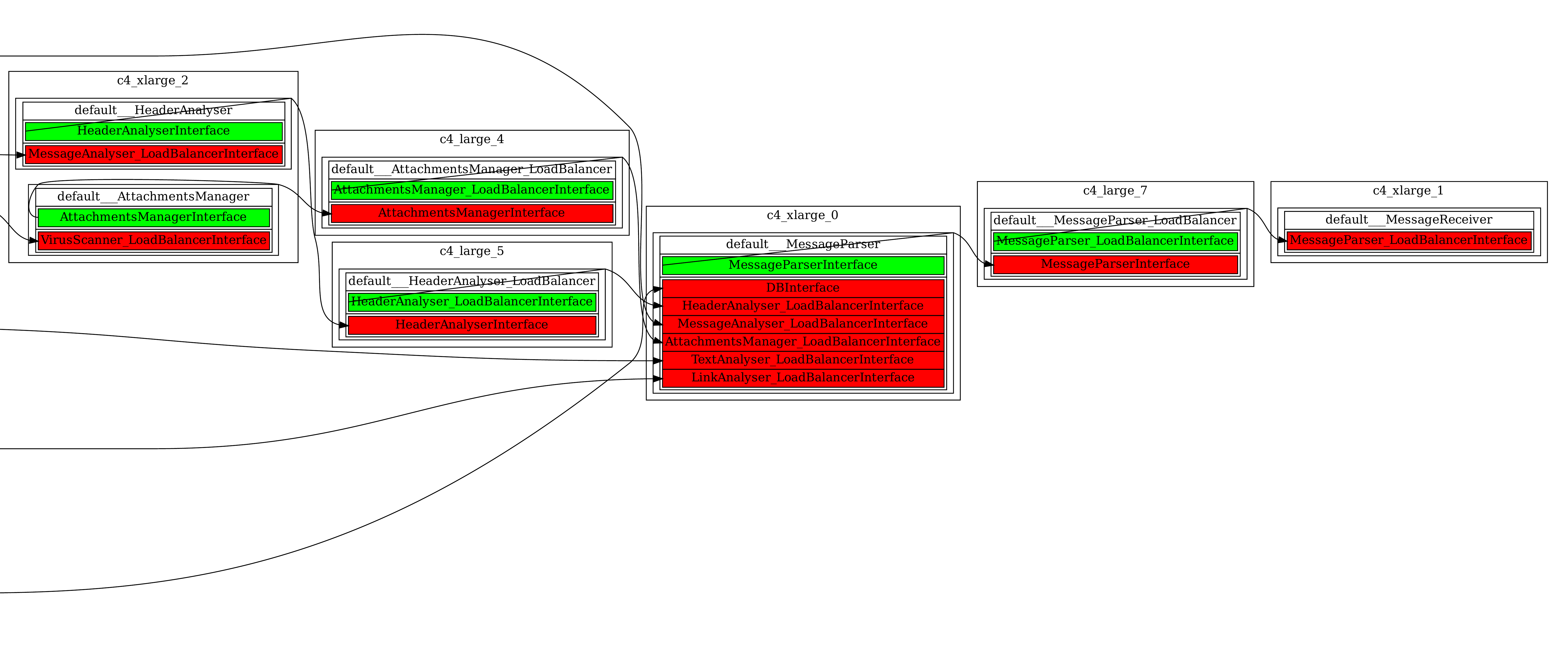}}
	\caption{Initial configuration of the email microservice system spitted in three for visualization purposes.}
	\label{fig:initial_conf_split}
\end{figure}

Figure \ref{fig:initial_conf} provides the graphical representation of the 
automatically synthesized initial 
configuration for our case study.
The same image, for visualization purposes, has been splitted in three and 
shown in Figure \ref{fig:initial_conf_split}.

In this figure, the outermost 
boxes represent the AWS virtual machines, while the innermost boxes represent 
the services deployed on that virtual machines.
The box names represent the kind of virtual machines 
used and the kind of objects deployed
(preceded by the word {\sf default}, corresponding
to an ABS/\tool parameter that we have not used
in our case-study).

The red boxes within a microservice $A$ represent the required
interfaces (either strong or weak), the green boxes represent the
provided interfaces of $A$.
An arrow 
from a service $A$ towards a service $B$ represents the fact that $A$ is used 
at runtime by $B$ and that $B$ needs to know the reference to $A$.

As can be seen from the image, the optimal initial deployment
consists of 24 
components, distributed over 5 virtual machines 
of type 2xlarge, 4 of type xlarge, and 10 of type large.

\end{document}